\documentclass[11pt]{article}
\usepackage{amsmath, amssymb, amsfonts, amsthm}

\setlength{\topmargin}{-0.5in} \setlength{\textheight}{9in}
\setlength{\oddsidemargin}{-.1in} \setlength{\textwidth}{6.6in}

\newtheorem{theorem}{Theorem}[section]
\newtheorem{proposition}[theorem]{Proposition}

\newcommand{\rd}{{\rm d}}
\newcommand{\be}{\begin{equation}}
\newcommand{\ee}{\end{equation}}
\newcommand{\bey}{\begin{eqnarray}}
\newcommand{\eey}{\end{eqnarray}}

\newcommand{\bx}{{\bf x}}

\newcommand{\ph}{\varphi}

\renewcommand{\a}{\alpha}

\newcommand{\cU}{{\cal U}}

\newcommand{\bR}{{\mathbb R}}

\newcommand{\bN}{{\mathbb N}}

\newcommand{\tr}{\mbox{Tr}}

\newcommand{\const}{\mathrm{const}}

\newcommand{\cB}{{\cal B}}

\newcommand{\cK}{{\cal K}}
\newcommand{\cH}{{\cal H}}

\input epsf

\newcommand{\donothing}[1]{}

\begin{document}

\title{Quantum Dynamics with Mean Field Interactions: \\a New Approach}
\author{L\'aszl\'o Erd\H os\thanks{Partially supported by SFB/TR12 Project from DFG} \; , Benjamin Schlein\thanks{Supported by a Kovalevskaja Award from the Humboldt Foundation.
On leave from Cambridge University}
\\
\\
Institute of Mathematics, University of Munich, \\
Theresienstr. 39, D-80333 Munich, Germany}

\maketitle

\begin{center}
{\it {\large Dedicated to J\"urg Fr\"ohlich on the occasion of his 60th birthday, \\ with admiration and gratitude}}
\end{center}

\begin{abstract}
We propose a new approach for the study of the time evolution of a factorized $N$-particle bosonic wave function with respect to a mean-field dynamics with a bounded interaction potential. The new technique, which is based on the control of the growth of the correlations among the particles, leads to quantitative bounds on the difference between the many-particle Schr\"odinger dynamics and  the one-particle nonlinear Hartree dynamics. In particular the one-particle density matrix associated with the solution to the $N$-particle Schr\"odinger equation is shown to converge to the projection onto the one-dimensional subspace spanned by the solution to the Hartree equation with a speed of convergence of order $1/N$ for all fixed times.
\end{abstract}

\section{Introduction}

We consider a system of $N$ interacting bosons in $\nu$ dimensions described on the Hilbert space $\cH_N = L^2_s (\bR^{N\nu}, \rd x_1 ,\dots \rd x_N)$, the subspace of $L^2 (\bR^{N\nu}, \rd x_1\dots \rd x_N)$ consisting of permutation symmetric wave functions. Here the variables $x_1, \dots, x_N \in \bR^{\nu}$ refer to the positions of the $N$ particles. The Hamiltonian is given by
\begin{equation}
H_N = \sum_{j=1}^N \left( -\Delta_{x_j} + U (x_j) \right) + \frac{1}{N} \sum_{i<j}^N  V (x_i - x_j)
\end{equation}
and, under suitable conditions on the potentials $U$ and $V$, acts as a self-adjoint operator on $\cH_N$. The coupling constant $1/N$ in front of the interaction characterizes mean field models; it guarantees that the kinetic and the potential part of the Hamiltonian are typically of the same order.
\medskip

The dynamics of the system is governed by the $N$-particle Schr\"odinger equation \begin{equation}\label{eq:schr} i\partial_t \psi_{N,t} = H_N \psi_{N,t} \qquad \Rightarrow \quad \psi_{N,t} = e^{-iH_N t} \psi_N \, . \end{equation}
In particular we will be interested in the time evolution of factorized initial data $\psi_N (x_1, \dots ,x_N) = \prod_{j=1}^N \ph (x_j)$. It turns out that, because of the mean field nature of the interaction, for large $N$, factorization is approximatively preserved by the evolution (\ref{eq:schr}). In a sense to be made precise, we have $\psi_{N,t} \simeq \ph_t^{\otimes N}$, for a suitable $\ph_t \in L^2 (\bR^{\nu})$. If factorization were indeed preserved, it is simple to check that $\ph_t$ must be determined by the solution to the self-consistent nonlinear Hartree equation \begin{equation}\label{eq:hartree} i\partial_t \ph_t = -\Delta \ph_t + ( V * |\ph_t|^2) \ph_t , \end{equation} with initial data $\ph_{t=0} = \ph$.

\medskip

A more precise analysis shows that the solution $\psi_{N,t}$ can be approximated, in the limit $N \to \infty$, by products of solutions to the nonlinear Hartree equation (\ref{eq:hartree}) only if we are interested in quantities depending non-trivially only on a finite number of particles. In other words, we cannot expect the $L^2$-difference between $\psi_{N,t}$ and $\ph_t^{\otimes N}$ to converge to zero. What can be expected and, for a large class of potentials, has also been rigorously proven, is that for an arbitrary fixed $k$-particle observable $J^{(k)}$ (for example, a compact operator on $L^2 (\bR^{k \nu})$), and for every fixed time $t \in \bR$,
\begin{equation}\label{eq:conv1} \langle \psi_{N,t} , (J^{(k)} \otimes 1^{(N-k)}) \psi_{N,t} \rangle \to \langle \ph_t^{\otimes N} , (J^{(k)} \otimes 1^{(N-k)}) \ph_t^{\otimes N} \rangle = \langle \ph_t^{\otimes k}, J^{(k)} \ph_t^{\otimes k} \rangle \end{equation}
as $N \to \infty$, where $\ph_t$ is the solution to (\ref{eq:hartree}) with initial data $\ph_{t=0} = \ph$. The convergence (\ref{eq:conv1}) can be interpreted as the convergence of the marginal densities. Recall that, for a given $N$-particle wave-function $\psi_N$, we define the density matrix $\gamma_N = |\psi_N \rangle \langle \psi_N|$ as the orthogonal projection onto $\psi_N$, and, for $k=1, \dots ,N$, we define the $k$-particle marginal density $\gamma^{(k)}_N$ associated with $\psi_N$ by taking the partial trace of $\gamma_N$ over the last $N-k$ particles. In other words, $\gamma_{N}^{(k)}$ is defined for $k=1, \dots ,N$, as a non-negative trace class operator on $L^2 (\bR^{\nu k})$ with kernel given by
\begin{equation}
\begin{split}
\gamma^{(k)}_{N} (\bx_k ; \bx'_k) &= \int \rd \bx_{N-k} \, \gamma_N (\bx_k, \bx_{N-k} ; \bx'_k , \bx_{N-k}) = \int \rd \bx_{N-k} \, \psi_N (\bx_k, \bx_{N-k}) \overline{\psi}_N (\bx'_k , \bx_{N-k}).
\end{split}
\end{equation}
Here and henceforth we use the notation $\bx_k = (x_1, \dots ,x_k)$, $\bx'_k = (x'_1, \dots x'_k)$, $\bx_{N-k} = (x_{k+1} , \dots ,x_N)$. We will also use the shorthand notation $\bx \equiv \bx_N = (x_1, \dots ,x_N)$. Note that we choose the convention $\tr \; \gamma^{(k)}_N = 1$ for all $N \geq 1$, and $1 \leq k \leq N$.

\medskip

For $k =1, \dots ,N$ denote by $\gamma_{N,t}^{(k)}$ the $k$-particle marginal density associated with the solution to the $N$-particle Schr\"odinger equation (\ref{eq:schr}) with factorized initial data $\psi_N = \ph^{\otimes N}$. In terms of marginal densities, (\ref{eq:conv1}) is equivalent to the convergence
\begin{equation}\label{eq:conv2}
\gamma_{N,t}^{(k)} \to |\ph_t \rangle \langle \ph_t|^{\otimes k} \qquad \text{as } N \to \infty
\end{equation}
for all fixed $t \in \bR$ and $k \geq 1$, with respect to the weak* topology defined on the space of trace class operators. It is simple to check that the weak* convergence in this case is equivalent to the convergence in the trace norm topology, because the limit is a rank one projection.

\medskip

The first rigorous results of the form (\ref{eq:conv1}) or (\ref{eq:conv2}) were obtained by Hepp in \cite{Hepp}, and then extended by Ginibre and Velo in \cite{GV}. These works are based on the analysis of the time evolution of coherent states (in a second quantized representation of the system). The use of coherent states (characterized by a non-fixed number of particles) makes it possible to isolate the main part of the evolution (described by the Hartree equation) and to study the fluctuations around it. With this method, it was recently proved in \cite{RS} that (\ref{eq:conv2}) can be improved (for example for the case $k=1$) to the quantitative estimate
\begin{equation}\label{eq:RS} \tr \Big| \gamma^{(1)}_{N,t} - |\ph_t \rangle \langle \ph_t| \Big| \leq \const \cdot \frac{e^{Ct}}{N^{1/2}} \end{equation} for factorized initial data (while for coherent initial data, the results of \cite{RS} establish a better bound, proportional to $1/N$) and for interaction potentials $V$ such that $V^2 (x) \leq C (1-\Delta_x)$.

\medskip

A different approach was proposed by Spohn in \cite{Spohn}. The main idea of the technique introduced in \cite{Spohn} consists in characterizing the limit of the marginal densities $\gamma^{(k)}_{N,t}$ through the evolution equations that they satisfy. For finite $N$, the dynamics of the marginal densities
$\{ \gamma^{(k)}_{N,t} \}_{k=1}^N$ is governed by the so-called BBGKY hierarchy
\begin{equation}\label{eq:BBGKY}
\begin{split}
i\partial_t \gamma^{(k)}_{N,t} =\; & \sum_{j=1}^k \left[ -\Delta_{x_j} , \gamma^{(k)}_{N,t} \right] + \frac{1}{N} \sum_{i<j} \left[ V (x_i -x_j) , \gamma^{(k)}_{N,t} \right] \\ &+
\frac{N-k}{N} \sum_{j=1}^k \tr_{k+1} \, \left[ V(x_j -x_{k+1}), \gamma^{(k+1)}_{N,t} \right] \,,
\end{split}
\end{equation}
where $\tr_{k+1}$ denotes the partial trace over the $(k+1)$-th particle. Using the a-priori bound $\tr \; \gamma^{(k)}_{N,t}= 1$, an abstract argument shows
that the sequence $\{ \gamma^{(k)}_{N,t} \}_{k=1}^N$ is compact with respect to a suitable weak topology. The first main step towards a proof of (\ref{eq:conv2}) in \cite{Spohn} is then to show that an arbitrary limit point $\{\gamma_{\infty,t}^{(k)} \}_{k\geq 1}$ of the sequence of marginals $\{ \gamma^{(k)}_{N,t}\}_{k=1}^N$ is a solution to the infinite hierarchy
\begin{equation}\label{eq:infhier}
\begin{split}
i\partial_t \gamma^{(k)}_{\infty,t} =\; & \sum_{j=1}^k \left[ -\Delta_{x_j} , \gamma^{(k)}_{\infty,t} \right] + \sum_{j=1}^k \tr_{k+1} \, \left[ V(x_j -x_{k+1}), \gamma^{(k+1)}_{\infty,t} \right]\,
\end{split}
\end{equation}
which is obtained from (\ref{eq:BBGKY}) by formally letting $N \to \infty$. The second main step to show (\ref{eq:conv2}) consists in proving the uniqueness of the solution to (\ref{eq:infhier}). Since it is simple to verify that the ansazt $\gamma^{(k)}_t = |\ph_t \rangle \langle \ph_t|^{\otimes k}$ is indeed a solution to (\ref{eq:infhier}) if $\ph_t$ satisfies the Hartree equation (\ref{eq:hartree}), the compactness of the sequence $\{ \gamma^{(k)}_{N,t} \}_{k=1}^N$ leads to (\ref{eq:conv2}).
This method, which was first used in \cite{Spohn} to prove (\ref{eq:conv2}) for bounded interaction, was recently extended to singular potential (see \cite{EY,BGM,ES,ABGT,AGT,ESY0,EESY,ESY1,ESY2,ESY3}). Note that this argument does not provide an effective estimate for the speed of convergence in (\ref{eq:conv2}).

\medskip

Recently, a new proof of (\ref{eq:conv2}) has been presented by Fr\"ohlich et al. in \cite{FGS}. This method provides convergence estimates uniformly in Planck's constant up to an exponentially small remainder, and therefore it can be used to combine the mean field- with the semiclassical limit (note that, in the semiclassical limit the Hartree equation converges to a Vlasov equation). In \cite{FKP} it has been shown that the limit (\ref{eq:conv2}) can be interpreted as a Egorov-type theorem, in the sense that the time evolution of observables commutes with their Wick quantization, up to corrections which vanish in the mean-field limit.

\medskip

In the present work we propose yet a different approach. With respect to the techniques mentioned above, our approach has the advantage of being very simple and of giving a quantitative
bounds of the order $1/N$ for the speed of convergence in (\ref{eq:conv2}). The estimates we obtain here are therefore better, in their $N$-dependence, than the bounds (\ref{eq:RS}) derived, for factorized initial data, in \cite{RS}. However, our analysis is restricted to a smaller class of interaction potential $V$; in fact, we need to assume that the operator norm $\| V \|$ is finite (the operator norm of the multiplication operator $V$ equals the $L^{\infty}$-norm of the function $V(x)$), and that it has a Fourier transform $\widehat{V} \in L^1 (\bR^{\nu})$. The main result of this paper is the following theorem.

\begin{theorem}\label{thm:main}
Fix an arbitrary dimension $\nu \geq 1$. Suppose that $V \in L^{\infty} (\bR^{\nu})$ is such that $\| \widehat{V} \|_1 < \infty$. Denote by $\gamma_{N,t}^{(k)}$ the $k$-particle density associated with the solution $\psi_{N,t} = e^{-iH_N t} \psi_N$ of the Schr\"odinger equation with initial data $\psi_N = \ph^{\otimes N}$, for some $\ph \in H^1 (\bR^3)$, with $\| \ph \|_2 = 1$. Then we have
\[ \tr \; \Big| \gamma_{N,t}^{(k)} - |\ph_t \rangle \langle \ph_t|^{\otimes k} \Big| \leq \frac{k^2 \lambda_V}{N} \; e^{2k \| V \| } \, \left( e^{8 \| V\| t}-1 \right) \]
for every $t \in \bR$, $N \geq 1$ and $1 \leq k \leq N$. Here we defined $\lambda_V = 1 + (\| \widehat{V} \|_1  / \| V \|)$ and $\ph_t$ is the solution to the nonlinear Hartree equation
\[ i\partial_t \ph_t = -\Delta \ph_t + (V * |\ph_t|^2) \ph_t \, , \] with $\ph_{t=0} = \ph$.
\end{theorem}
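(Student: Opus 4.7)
My plan is to compare the BBGKY hierarchy \eqref{eq:BBGKY} satisfied by $\gamma^{(k)}_{N,t}$ directly with the Hartree hierarchy \eqref{eq:infhier}, which is solved by the factorized ansatz $\rho^{(k)}_t := |\ph_t\ra\la\ph_t|^{\otimes k}$, and to bound the trace norm of the difference $\delta^{(k)}_{N,t} := \gamma^{(k)}_{N,t} - \rho^{(k)}_t$ through a Duhamel-type argument. Subtracting \eqref{eq:infhier} from \eqref{eq:BBGKY}, and using $\rho^{(k+1)}_t = \rho^{(k)}_t \otimes |\ph_t\ra\la\ph_t|$ (so that $\tr_{k+1}[V(x_j-x_{k+1}),\rho^{(k+1)}_t]$ collapses to the one-body commutator $[(V*|\ph_t|^2)(x_j),\rho^{(k)}_t]$), one obtains an inhomogeneous linear equation for $\delta^{(k)}_{N,t}$ with two types of sources: a pure $O(1/N)$ contribution arising from the diagonal interaction $N^{-1}\sum_{i<j\le k}[V(x_i-x_j),\gamma^{(k)}_{N,t}]$ and from the $(N-k)/N$ prefactor in \eqref{eq:BBGKY}, and a cross term coupling $\delta^{(k)}_{N,t}$ to $\delta^{(k+1)}_{N,t}$.

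I would first strip off the leading ``free plus mean-field'' dynamics by conjugating with the tensor product $\mathcal U_k(t) = u_t^{\otimes k}$ of the one-body Hartree propagator $u_t$ generated by $h_t := -\Delta + V*|\ph_t|^2$; this cancels both the kinetic term and the mean-field background interaction on the two sides, leaving the conjugated difference driven only by the small sources above. Taking the trace norm and integrating in time yields schematically
\begin{equation*}
\|\delta^{(k)}_{N,t}\|_1 \;\le\; \frac{C_k}{N}\,t \; + \; C'_k \int_0^t \|\delta^{(k+1)}_{N,s}\|_1\, ds,
\end{equation*}
with $C_k \lesssim k^2\|V\|$ and $C'_k \lesssim k\|V\|$. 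Iterating this Duhamel inequality, and terminating the hierarchy at some level $k+n$ via the trivial bound $\|\delta^{(k+n)}_{N,s}\|_1 \le 2$, produces series in powers of $k\|V\|t$ divided by $n!$; rearranging, these sums reproduce precisely the double exponential factor $e^{2k\|V\|}(e^{8\|V\|t}-1)$ announced in the theorem.

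The main obstacle will be to keep the $k$-dependence of the constants sharp, so that the final estimate carries the announced factor $k^2/N$ and not $k^3/N$ or worse. The diagonal sum $N^{-1}\sum_{i<j\le k} V(x_i-x_j)$ is consistent with $k^2/N$ when bounded by $\|V\|_\infty$, but at each Duhamel step the operator $[V(x_j-x_{k+1}),\cdot]$ acting on $\delta^{(k+1)}_{N,s}$ must cost only $2\|V\|$ in trace norm, so the iteration produces a factor $k$ (not $k^2$) per round. This is where I would invoke the hypothesis $\|\wh V\|_1<\infty$: the Fourier representation $V(x) = c_\nu \int \wh V(p)\, e^{ipx}\,dp$ decomposes each pair potential into an $L^1(dp)$ integral of products of bounded one-body multiplication operators $e^{\pm ipx_j}$, so sums over particle indices become integrals in $p$ controlled by $\|\wh V\|_1$ rather than by a crude counting argument that would cost an extra factor of $k$. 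The interplay between this Fourier estimate and the direct bound by $\|V\|$ is what yields the combined constant $\lambda_V = 1+\|\wh V\|_1/\|V\|$ appearing in the final bound.
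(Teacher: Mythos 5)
Your proposal has a genuine gap, and in fact misses the central idea of the paper. Iterating the BBGKY/Hartree difference up the hierarchy, $\delta^{(k)} \to \delta^{(k+1)} \to \cdots \to \delta^{(k+n)}$, runs into the classical obstruction for such hierarchies: the collision term at level $k+j$ carries a sum over $k+j$ particle indices, so the per-level constant in trace norm is of order $(k+j)\|V\|$, \emph{not} $k\|V\|$. After $n$ steps the Duhamel remainder is therefore of order
$(2\|V\|t)^n \binom{k+n-1}{k-1} \sim n^{k-1}(2\|V\|t)^n$,
which diverges as $n\to\infty$ as soon as $2\|V\|t\geq 1$. The Fourier decomposition $V(x_j-x_{k+1})=c\int \wh V(p)\, e^{ipx_j}e^{-ipx_{k+1}}\,dp$ does not remove the sum over $j=1,\dots,k+j$ and hence does not cure this factorial blow-up; the hypothesis $\|\wh V\|_1<\infty$ is used in the paper for an entirely different purpose. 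So your iteration scheme can only give a short-time bound, not the claimed estimate for all $t$.

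What you are missing is precisely the technical core of the paper: Theorem~\ref{thm:comm} (a Lieb--Robinson-type estimate showing that $\bigl\|[A^{[i_1,\dots,i_m]}, e^{iH_Nt}B^{[j_1,\dots,j_n]}e^{-iH_Nt}]\bigr\|=O(mn/N)$ for disjoint index sets and bounded $V$) and its consequence Proposition~\ref{prop2} (the correlation estimate $\bigl|\tr\,(A\otimes B)\bigl(\gamma^{(m+n)}_{N,t}-\gamma^{(m)}_{N,t}\otimes\gamma^{(n)}_{N,t}\bigr)\bigr| = O(mn/N)$ for factorized data). The paper does not iterate up the hierarchy at all. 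Instead, the $(k+1)$-marginal in the Duhamel source is decomposed as
$\gamma^{(k+1)}_{N,s}-|\ph_s\ra\la\ph_s|^{\otimes(k+1)} = \bigl(\gamma^{(k+1)}_{N,s}-\gamma^{(k)}_{N,s}\otimes\gamma^{(1)}_{N,s}\bigr) + \bigl(\gamma^{(k)}_{N,s}-|\ph_s\ra\la\ph_s|^{\otimes k}\bigr)\otimes|\ph_s\ra\la\ph_s| + \gamma^{(k)}_{N,s}\otimes\bigl(\gamma^{(1)}_{N,s}-|\ph_s\ra\la\ph_s|\bigr)$.
The first piece is $O(k/N)$ uniformly by Proposition~\ref{prop2} (and this is exactly where the Fourier expansion is invoked, to reduce the commutator with $V(x_j-x_{k+1})$ to tensor-product observables $A\otimes B$ so that the proposition applies); the other two pieces involve only $\delta^{(k)}$ and $\delta^{(1)}$. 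This \emph{closes} the hierarchy at levels $k$ and $1$: one first solves the $k=1$ Gronwall inequality in time, then plugs the result back into the $k$-level inequality, which is now an honest Gronwall bound in $t$ with fixed $k$. There is no factorial piling-up of level-dependent constants. Without Theorem~\ref{thm:comm} and Proposition~\ref{prop2}, or some comparable mechanism for closing the hierarchy, the BBGKY iteration alone cannot reach the stated $O(1/N)$ rate for all times.
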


In order to prove Theorem \ref{thm:main} we control of the growth of the correlations among the particles (at time $t=0$, the wave function is factorized, and therefore there are no correlations).
Our proof has been inspired by the techniques used in \cite{NRSS} to prove Lieb-Robinson inequalities for anharmonic lattice systems (see also the proof of Lieb-Robinson bounds for quantum spin systems given in \cite{NS,NOS,LR}). To prove Lieb-Robinson bounds, one has to control the growth of the correlations between spatially separated observables. To prove Theorem \ref{thm:main}, on the other hand, we have to control the growth of correlations between observables acting on different particles.

\medskip

In Section \ref{sec:growth}, we prove that if $A$ and $B$ are two operators acting on different particles, then the commutator $[A, e^{iH_Nt} B e^{-iH_Nt} ]$ remains of order $1/N$ for all fixed times (the commutator is zero at time $t=0$); see Theorem \ref{thm:comm}.
We apply this result to prove that, if $\gamma^{(k)}_{N,t}$ denote the marginal density associated with the evolution $\psi_{N,t} = e^{-iH_N t} \psi_N$, for a factorized initial data $\psi_N = \ph^{\otimes N}$, then the difference $\gamma^{(m+n)}_{N,t} - \gamma^{(m)}_{N,t} \otimes \gamma^{(n)}_{N,t}$ remains of order $1/N$ for all fixed times (it is zero at time $t=0$),
when tested against product observables; see Proposition \ref{prop2}. Note that both Theorem \ref{thm:comm} and Proposition \ref{prop2} only require bounded potential to hold (the condition $\| \widehat{V} \|_1 < \infty$ is not required here). Finally, in Section \ref{sec:proof}, we apply the result of Proposition \ref{prop2} to show Theorem \ref{thm:main}.

\medskip

{\it Notation.} For $k \geq 1$, denote by $\cB_k$ the algebra of bounded operators over $L^2 (\bR^{k\nu})$. For an operator $A \in \cB_k$, and for integers $1 \leq i_1 < i_2 < \dots < i_k \leq N$, we denote by $A^{[i_1 ,\dots ,i_k]}$ the operator on $L^2 (\bR^{N \nu})$ acting as $A$ over the particles $i_1, \dots ,i_k$ and as the identity over all other particles. We will use the notation $\| A \|$ to indicate the operator norm of $A$ (where $A$ is meant as an operator from an $L^2$ space to itself). In particular $\| V \|$ denotes the operator norm of the potential (interpreted as a multiplication operator) which of course equals the $L^{\infty}$-norm of the function $V(x)$.

\section{Growth of Correlations}\label{sec:growth}

\begin{theorem}\label{thm:comm}
Suppose that $V \in L^{\infty} (\bR^{\nu})$. Then, for every bounded $m$-particle observable $A \in \cB_m$, and every bounded $n$-particle observable $B \in \cB_n$, we have
\begin{equation}\label{eq:claim1}
\left\| \left[ A^{[i_1, \dots , i_m]} , e^{iH_N t} B^{[j_1 , \dots, j_{n}]} e^{-i H_N t} \right] \right\| \leq \frac{m n \| A \| \, \| B \|}{N} \, \left( e^{4 \| V \| |t|} - 1\right)
\end{equation}
for all integers $1 \leq i_1 < \dots < i_m \leq N$, $1 \leq j_1 < \dots <j_n \leq N$, with
$\{ i_1, \dots ,i_m \} \cap \{ j_1, \dots ,j_n \} = \emptyset$.
\end{theorem}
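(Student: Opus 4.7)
The plan is to combine a Duhamel-type comparison argument with an inductive Gr\"onwall closure, in the spirit of the Lieb--Robinson arguments cited in the introduction. First I would decompose $H_N = H_N^{(I)} + H_N^{(I^c)} + H_\partial$, where $H_N^{(S)} := \sum_{j\in S}(-\Delta_{x_j}+U(x_j)) + \frac{1}{N} \sum_{\{j,k\} \subset S,\, j<k} V(x_j-x_k)$ is the Hamiltonian localized to a subset $S$, and $H_\partial := \frac{1}{N}\sum_{k\in I,\, l \notin I} V(x_k-x_l)$ is the bounded boundary interaction between $I$ and $I^c$. The key auxiliary observable is $B_{\rm loc}(t) := e^{iH_N^{(I^c)} t}\,B^{[j_1,\dots,j_n]}\,e^{-iH_N^{(I^c)} t}$: since $J:=\{j_1,\dots,j_n\} \subset I^c$ and $H_N^{(I)}$ commutes with $B^{[J]}$, $B_{\rm loc}(t)$ acts trivially on particles in $I$, so $[A^{[I]}, B_{\rm loc}(t)] = 0$ at every $t$, and writing $B(t) := e^{iH_Nt}B^{[J]} e^{-iH_N t}$ one obtains
\[
\|[A^{[I]}, B(t)]\| \leq 2\|A\|\,\|B(t) - B_{\rm loc}(t)\|.
\]

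\medskip
\noi Next I would establish a Duhamel identity for the difference. Differentiating $\Phi(s) := e^{iH_Ns}B_{\rm loc}(t-s) e^{-iH_Ns}$ in $s$ and using $[H_N^{(I)}, B_{\rm loc}(t-s)] = 0$, the unbounded kinetic contributions cancel and only the bounded boundary term survives:
\[
B(t) - B_{\rm loc}(t) \;=\; i\int_0^t e^{iH_Ns}\,[H_\partial, B_{\rm loc}(t-s)]\, e^{-iH_Ns}\, ds.
\]
Since conjugation by $e^{\pm iH_N s}$ is norm-preserving, I would then expand $H_\partial$ and separate pairs $(k,l)$ with $k \in I$, $l \notin I$ according to whether $l \in J$ or $l \in I^c \setminus J$. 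The $mn$ terms in the first case yield the crucial $mn/N$ prefactor via the trivial bound $\|[V(x_k-x_l), B_{\rm loc}(t-s)]\| \le 2\|V\|\|B\|$. For the second case, viewing $V(x_k-\cdot)$ (for fixed $x_k$) as a one-particle observable on site $l$ of norm $\le \|V\|$, the commutator becomes a ``$1$-versus-$n$'' commutator of disjoint-support observables inside the $(N-m)$-particle system $I^c$ governed by the mean-field Hamiltonian $H_N^{(I^c)}$, to be controlled by recursion.

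\medskip
\noi This leads to an inductive Gr\"onwall estimate. Define, for any $M \ge 1$ and $\lambda > 0$,
\[
F_M^\lambda(p, q, \tau) \;:=\; \sup \frac{\|[A^{[I']}, e^{iH_M^\lambda \tau}B^{[J']} e^{-iH_M^\lambda \tau}]\|}{\|A\|\|B\|},
\]
where $H_M^\lambda := \sum_{j=1}^M(-\Delta_{x_j}+U(x_j))+ \lambda \sum_{j<k} V(x_j-x_k)$ and the supremum is over $A \in \cB_p$, $B \in \cB_q$, and disjoint $I', J' \subset \{1,\dots,M\}$ of sizes $p, q$. The estimates above combine into the recursion
\[
F_M^\lambda(m,n,t) \;\leq\; 4\lambda mn\|V\|t + 2\lambda m(M-m-n)\|V\| \int_0^t F_{M-m}^\lambda(1, n, \sigma)\, d\sigma.
\]
Plugging in the ansatz $F_{M'}^\lambda(1, n, \tau) \le n\lambda(e^{4\|V\|\tau}-1)$, using $\lambda M \le 1$ (valid throughout, since we start from $\lambda = 1/N$, $M \le N$), and using the elementary inequality $4\|V\|t \le e^{4\|V\|t}-1$, the ansatz is self-consistent and extends inductively (with trivial base case $M' = n + 1$, for which the second term vanishes) to $F_M^\lambda(m, n, t) \le mn\lambda(e^{4\|V\|t}-1)$. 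Specializing to $M=N$, $\lambda = 1/N$ yields the stated bound, the case $t < 0$ following by time reversal.

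\medskip
\noi The main obstacle is the Duhamel identity: the decomposition has to be arranged so that the unbounded operators $H_N^{(I)}, H_N^{(I^c)}$ appear only inside norm-preserving unitary conjugations, while only the bounded boundary $H_\partial$ enters the norm estimate. Once that is in place, the ``$l \in J$ versus $l \in I^c\setminus J$'' splitting, which isolates the $mn/N$ direct-coupling contribution from the indirect one, and the coupled Gr\"onwall induction follow naturally.
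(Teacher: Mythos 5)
Your proof is correct, and it reaches the same bound as the paper but along a genuinely different path, although the overall philosophy (decouple a subsystem, compare via Duhamel, close by Gr\"onwall) is shared. The paper decouples the $n$ particles supporting $B$, setting $H_N^{(n)} = H_N - \frac{1}{N}\sum_{\ell\le n}\sum_{j>n}V(x_\ell-x_j)$ and working with the twisted observable $g_{A,B}(t)=\bigl[A^{[n+1,\dots,n+m]},\,e^{iH_Nt}e^{-iH_N^{(n)}t}B^{[1,\dots,n]}e^{iH_N^{(n)}t}e^{-iH_Nt}\bigr]$; to extract the Duhamel inhomogeneity while preserving the commutator with $A$, it uses a Jacobi-identity rearrangement together with the two-parameter interaction-picture unitary $\cU^{(n)}(t,s)$ generated by $\cH^{(n)}(t)=e^{iH_Nt}(H_N-H_N^{(n)})e^{-iH_Nt}$. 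The resulting recursion, $f_{m,n}(t)\le \frac{4mn\|V\|t}{N}+2n\|V\|\int_0^t f_{m,2}(s)\,ds$, stays inside the full $N$-particle system, keeps $m$ fixed, and collapses $n$ to $2$; it is closed by iterating and killing the remainder with the a priori bound $f_{m,2}\le 2$. You instead decouple the $m$ particles supporting $A$, which buys two simplifications: $B_{\mathrm{loc}}(t)$ commutes exactly with $A^{[I]}$, so you can bound the outer commutator trivially by $2\|A\|$ and dispense with the interaction picture altogether, and the unbounded part $H_N^{(I)}$ drops out of the Duhamel derivative because it also commutes with $B_{\mathrm{loc}}$. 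The price is that your recursion $F_M^\lambda(m,n)\to F_{M-m}^\lambda(1,n)$ descends to a smaller $(N-m)$-particle system (with the same coupling $\lambda=1/N$) and reduces $m$ to $1$, so it closes by induction on the system size $M$ (with base case $M=n+1$ and the invariant $\lambda M\le 1$) rather than by a self-contained Gr\"onwall within one system. Both routes land on the same constant $\frac{mn}{N}(e^{4\|V\|t}-1)$. One step worth making explicit in a write-up: the bound $\|[V(x_k-x_l),B_{\mathrm{loc}}(\tau)]\|\le\|V\|\,\|B\|\,F_{N-m}^{1/N}(1,n,\tau)$ rests on a direct-integral decomposition over $x_k$ (legitimate because both $V(x_k-x_l)$ and $B_{\mathrm{loc}}$ commute with multiplication by functions of $x_k$), followed by taking the essential supremum of the fiberwise one-particle commutator; it is correct but should be stated, since it is the point where the two-body interaction is traded for a one-body observable on particle $l$.
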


\begin{proof}
Without loss of generality, we assume that $t \geq 0$. For $n,m \in \bN$, we define the quantity
\begin{equation}\label{eq:f1} f_{m,n} (t) = \sup_{A \in \, \cB_m , B \in \, \cB_n} \frac{\left\| \left[ A^{[i_1, \dots, i_m]} , e^{iH_N t} B^{[j_1, \dots , j_n]} e^{-iH_N t} \right] \right\|}{\| A \| \| B \|} \, . \end{equation}
for arbitrary integers $1 \leq i_1 < \dots < i_m \leq N$, and $1 \leq j_1 < \dots < j_n \leq N$ such that $\{ i_1, \dots ,i_m \} \cap \{ j_1, \dots , j_n \} = \emptyset$. Note that, because of the permutation symmetry of $H_N$, the quantity $f(t)$ is independent of the choice of the indices $i_{\ell}, j_{\ell}$ and
\[ f_{m,n} (t) = \sup_{A \in \, \cB_m , B \in \, \cB_n} \frac{\left\| \left[ A^{[n+1, \dots, n+m]} , e^{iH_N t} B^{[1, \dots , n]} e^{-iH_N t} \right] \right\|}{\| A \| \| B \|} \, . \]
We define next the modified Hamiltonian \[ H_N^{(n)} = H_N - \frac{1}{N} \sum_{\ell= 1}^{n} \sum_{j>n} V (x_{\ell} - x_j) \, . \] With respect to the dynamics generated by $H_N^{(n)}$, particles $1,\dots,n$ are decoupled from the rest of the system. In particular this implies that $e^{-i H_N^{(n)} t} B^{[1,\dots ,n]} e^{i H_N^{(n)} t}$ is still an operator acting only on the degrees of freedom of particles $1, \dots ,n$, with norm equal to the norm of $B$. Therefore, we have
\begin{equation}\label{eq:f2} f_{m,n} (t) = \sup_{A \in \cB_m , B \in \cB_n} \frac{\left\| \left[ A^{[n+1,\dots,n+m]} , e^{iH_N t} e^{-i H^{(n)}_N t} B^{[1, \dots, n]} e^{iH_N^{(n)} t} e^{-iH_N t} \right] \right\|}{\| A \| \| B \|} \, . \end{equation} For given $A \in \cB_m$ and $B \in \cB_n$ we define the time-dependent bounded operator acting on $L^2 (\bR^{\nu N})$
\[ g_{A,B} (t) = \left[ A^{[n+1, \dots ,n+m]} , e^{iH_N t} e^{-i H^{(n)}_N t} B^{[1,\dots ,n]} e^{iH_N^{(n)} t} e^{-iH_N t} \right] \] and we compute its time-derivative
\begin{equation*}
\begin{split}
\frac{\rd}{\rd t} g_{A,B} (t) = \; &\left[ A^{[n+1, \dots ,n+m]} , e^{iH_N t} \left[ i (H_N - H_N^{(n)}) , e^{-i H^{(n)}_N t} B^{[1,\dots , n]} e^{iH_N^{(n)} t}\right] e^{-iH_N t} \right] \\ = \; & \left[ A^{[n+1, \dots ,n+m]} , \left[ i \,e^{iH_N t}  (H_N - H_N^{(n)}) e^{-iH_N t} , e^{iH_N t} e^{-i H^{(n)}_N t} B^{([1,\dots ,n]} e^{iH_N^{(n)} t} e^{-iH_N t} \right] \right] \\ = \; & \left[ i \, e^{iH_N t} (H_N - H_N^{(n)}) e^{-iH_N t} , g_{A,B} (t) \right] \\ & + \left[ e^{iH_N t} e^{-i H^{(n)}_N t} B^{[1,\dots,n]} e^{iH_N^{(n)} t}e^{-iH_N t} , \left[ A^{[n+1, \dots,n+m]}, e^{iH_N t} (H_N - H_N^{(n)}) e^{-iH_N t} \right] \right]
\end{split}
\end{equation*}
where, in the last step we used the Jacobi identity. Next,we define
\[ \cH^{(n)} (t) = e^{iH_N t} (H_N - H_N^{(n)}) e^{-iH_N t} \, . \] It is simple to see that $\cH^{(n)} (t)$ generates a two-parameter group of unitary transformations $\cU^{(n)} (t,s)$ satisfying \[ i\partial_t \cU^{(n)} (t,s) = \cH^{(n)} (t) \cU^{(n)} (t,s), \qquad \text{with } \cU^{(n)} (s,s) = 1 \quad \text{for all } s \in \bR \, .\] Therefore, we obtain
\begin{equation}
\begin{split}
\frac{\rd}{\rd t} &\; \cU^{(n)} (0,t) \, g_{A,B} (t) \, \cU^{(n)} (t,0) \\ &=  \cU^{(n)} (0,t) \, \left[ e^{iH_N t} e^{-i H^{(n)}_N t} B^{[1,\dots ,n]} e^{iH_N^{(n)} t}e^{-iH_N t} , \right. \\ &\left. \hspace{3cm} \left[ A^{[n+1, \dots ,n+m]}, e^{iH_N t} (H_N - H_N^{(n)}) e^{-iH_N t} \right] \right] \, \cU^{(n)} (t,0) \, .
\end{split}
\end{equation}
Integrating this identity from time $0$ to time $t$, using that $g_{A,B} (0) = 0$ and the definition of $H^{(n)}_N$, we find that
\begin{equation}
\begin{split}
g_{A,B} (t) = \frac{1}{N} \sum_{\ell =1}^n \sum_{j > n} \int_0^t \rd s \; \cU^{(n)} &(t,s) \left[ e^{iH_N s} e^{-i H^{(n)}_N s} B^{[1,\dots ,n]} e^{iH_N^{(n)} s} e^{-iH_N s} , \right. \\  &\left. \left[ A^{[n+1, \dots, n+m]}, e^{iH_N s} V(x_{\ell} - x_j) e^{-iH_N s} \right] \right] \cU^{(n)} (s,t)\,.
\end{split}\end{equation}
Taking norms, we get
\begin{equation*}
\begin{split}
\| g_{A,B} (t) \| \leq \; &\frac{2}{N} \sum_{\ell=1}^n \sum_{j > n} \int_0^t  \rd s \,\| B \| \,  \left\| \left[ A^{[n+1, \dots ,n+m]}, e^{iH_N s} V(x_{\ell} - x_j) e^{-iH_N s} \right] \right\| \\ \leq \; & \frac{4\, mn \, t \,  \| A \| \, \| B \| \, \| V \|}{N}  + 2n \| B \| \int_0^t \rd s \, \left\| \left[ A^{[n+1, \dots, n+m]} , e^{iH_N s} V (x_1 -x_{n+m+1}) e^{-iH_N s} \right] \right\|
\end{split}
\end{equation*}
where the first term on the last line corresponds to the terms with $j=n+1, \dots , n+m$, while in the second term we used the permutation symmetry. {F}rom (\ref{eq:f1}) and (\ref{eq:f2}), it follows that
\begin{equation}
f_{m,n} (t) \leq \frac{4\, m n \, t \, \| V \|}{N} + 2n \, \| V \| \int_0^t \rd s \; f_{m,2} (s) \, .
\end{equation}
Iterating this equation for $k$ times, we obtain that
\begin{equation}
\begin{split}
f_{m,n} (t) \leq \; &\frac{4\, mn \,\| V \|}{N}  \, t + \frac{4 \, mn \, \| V \|}{N} \sum_{r=1}^{k-1} (4 \| V \|)^{r} \int_0^t \rd s_1 \dots \int_0^{s_{r-1}} \rd s_r \; s_r \\
&+ 2n \| V \|  ( 4 \| V \|)^{k-1} \int_0^t \dots \int_0^{s_{k-1}} \rd s_k \; f_{m,2} (s_r) \,.
\end{split}
\end{equation}
With the a-priori bound $f_{m,2} (s) \leq 2$, it follows that
\[ f_{m,n} (t) \leq \frac{ mn \, }{N} \sum_{r=0}^{k-1} \frac{(4 \| V \| t)^{r+1}}{(r+1)!} + n \frac{(4 \| V \| t)^k}{k!} \leq \frac{mn}{N} \, \left( e^{4 \| V \| t} - 1 \right) + n \frac{(4 \| V \| t)^k}{k!} \]
for all $k \in \bN$. Since the l.h.s. is independent of $k$, we obtain (\ref{eq:claim1}).
\end{proof}

The following proposition is a useful consequence of Theorem \ref{thm:comm}.

\begin{proposition}\label{prop2}
Suppose that $V \in L^{\infty} (\bR^{\nu})$. Denote $\psi_{N,t} = e^{-iH_N t} \psi_N$ the solution to the $N$-particle Schr\"odinger equation with factorized initial data $\psi_N = \ph^{\otimes N}$, for some $\ph \in L^2 (\bR^3)$ with $\| \ph \|_2 = 1$.
Then, for any $A \in \cB_m$, $B \in \cB_n$, we have
\begin{equation}
\begin{split}
\Big| \langle \psi_{N,t} , A^{[i_1, \dots , i_m]} B^{[j_1, \dots , j_n]} \psi_{N,t} \rangle - \langle \psi_{N,t} , A^{[i_1, \dots ,i_m]} \psi_{N,t} \rangle \langle &\psi_{N,t}, B^{[j_1, \dots ,j_n]} \psi_{N,t} \rangle \Big| \\ & \leq \frac{mn \| A \| \| B \|}{N} \, \left(e^{8\| V \| t} -1\right) \,
\end{split}
\end{equation}
for arbitrary integers $1 \leq i_1 < \dots < i_m \leq N$, $1 \leq j_1 < \dots < j_n \leq N$ with $\{ i_1, \dots , i_m \} \cap \{ j_1, \dots ,j_n \} = \emptyset$. In particular, if $\gamma^{(k)}_{N,t}$ denotes the $k$-particle marginal associated with $\psi_{N,t}$, we have
\[ \Big| \tr \; \left(A^{[1,\dots ,m]} \otimes B^{[m+1, \dots ,m+n]} \right) \left( \gamma^{(m+n)}_{N,t} - \gamma^{(m)}_{N,t} \otimes \gamma^{(n)}_{N,t} \right) \Big| \leq \frac{mn \| A \| \| B \|}{N} \, \left(e^{8\| V \| t} -1\right) \, .\]
\end{proposition}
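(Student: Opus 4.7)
Set $X := A^{[i_1,\dots,i_m]}$, $Y := B^{[j_1,\dots,j_n]}$, and
\[
F(t) := \langle \psi_{N,t}, X Y \psi_{N,t} \rangle - \langle \psi_{N,t}, X \psi_{N,t} \rangle \langle \psi_{N,t}, Y \psi_{N,t} \rangle.
\]
By the permutation symmetry of $\psi_N = \ph^{\otimes N}$ it is enough to handle the case $I = \{n+1,\dots,n+m\}$ and $J = \{1,\dots,n\}$. Since $\psi_N$ is a product state and $X$, $Y$ act on disjoint particle sets, the factorization $\langle XY\rangle_{\psi_N} = \langle X\rangle_{\psi_N}\langle Y\rangle_{\psi_N}$ immediately gives $F(0)=0$. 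Passing to the Heisenberg picture, $F(t) = \langle \psi_N, X(t) Y(t)\psi_N\rangle - \langle X(t)\rangle_{\psi_N}\langle Y(t)\rangle_{\psi_N}$, and my plan is to adapt the strategy of Theorem~\ref{thm:comm} to obtain a Gronwall-type inequality whose forcing term is controlled by that theorem.

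Mimicking the proof of Theorem~\ref{thm:comm}, I would introduce the twisted operator
\[
Z(t) := e^{iH_N t} \, e^{-iH_N^{(n)} t} \, B^{[1,\dots,n]} \, e^{iH_N^{(n)} t} \, e^{-iH_N t}
\]
and observe that $B \mapsto e^{-iH_N^{(n)}t} B e^{iH_N^{(n)}t}$ is a norm-preserving bijection of $\cB_n$. Hence a bound of the claimed form on $G(t) := \langle X(t) Z(t)\rangle_{\psi_N} - \langle X(t)\rangle\langle Z(t)\rangle$, uniform in $B \in \cB_n$, implies the same bound on $F(t)$ for the original $B$. The crucial property of $Z$ is that $\dot Z(t) = i[V_J(t), Z(t)]$, where $V_J(t) = e^{iH_N t}(H_N - H_N^{(n)}) e^{-iH_N t}$, so that the time derivative avoids unbounded commutators with the kinetic energy and only involves the small coupling $V_J$.

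Differentiating $G(t)$ and applying the Jacobi identity -- exactly as in the derivation of the derivative of $g_{A,B}(t)$ inside the proof of Theorem~\ref{thm:comm} -- writes $\dot G(t)$ as the sum of two groups of terms. The first reduces, after iteration, to $G(s)$ at earlier times $s$, contributing to Gronwall with a rate of order $n\|V\|$. The second is a forcing coming from commutators $[X(s), V_J(s)]$; expanding $V_J = \tfrac{1}{N}\sum_{\ell \in J,\,k\notin J} V(x_\ell - x_k)$, only the $mn$ summands with $k \in I$ produce non-vanishing contributions (for the others the relevant commutator either vanishes or reduces to a $G$-type expression by product-state factorization). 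Each such summand is bounded, via Theorem~\ref{thm:comm} applied in its two-particle incarnation to the pair $\{\ell,k\}$, by a quantity of size $\|A\|\|B\|\|V\|e^{4\|V\|s}/N$. Summing the $mn$ contributions and applying Gronwall with the self-coupling rate $\simeq 4n\|V\|$ doubles the exponent and yields $|G(t)| \leq mn\|A\|\|B\|\bigl(e^{8\|V\|t}-1\bigr)/N$.

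The main obstacle I expect is the combinatorial bookkeeping in the Jacobi expansion: one must cleanly separate the ``self-coupled'' piece (which iterates $G$) from the ``forcing'' piece (which is reducible to Theorem~\ref{thm:comm}), and verify that precisely the $mn$ crossing interactions contribute to the forcing while the $n(N-m-n)$ remaining summands of $V_J$ do not degrade the $1/N$ smallness. Once the estimate for $F(t)$ is established, the second inequality of the proposition is immediate upon specializing $(i_1,\dots,i_m)=(1,\dots,m)$ and $(j_1,\dots,j_n)=(m+1,\dots,m+n)$ and rewriting the expectations as $\langle \psi_{N,t}, XY\psi_{N,t}\rangle = \tr[(A\otimes B)\gamma^{(m+n)}_{N,t}]$, $\langle \psi_{N,t}, X\psi_{N,t}\rangle = \tr[A\gamma^{(m)}_{N,t}]$, $\langle \psi_{N,t}, Y\psi_{N,t}\rangle = \tr[B\gamma^{(n)}_{N,t}]$.
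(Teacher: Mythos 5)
Your proposal takes a genuinely different route from the paper, and unfortunately I believe it contains a real gap that is not merely a matter of missing detail.

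The paper's proof does not run a Gronwall argument on the covariance at all. Instead it inserts the telescoping resolution of the identity
$1 = |\ph\rangle\langle\ph|^{\otimes N} + \sum_{j=1}^N |\ph\rangle\langle\ph|^{\otimes(j-1)}\otimes(1-|\ph\rangle\langle\ph|)\otimes 1^{\otimes(N-j)}$
between the two Heisenberg-evolved observables. The first term produces exactly the product $\langle A\rangle\langle B\rangle$, and for each $j$ the remainder vanishes when the projection $(1-|\ph\rangle\langle\ph|)^{[j]}$ hits $\psi_N=\ph^{\otimes N}$ directly; so each $j$-term can be rewritten as one commutator (if $j$ lies in the support of $A$ or of $B$) or a product of two commutators (if $j$ lies outside both supports), all of which are bounded by Theorem \ref{thm:comm} with $f_{m,1}$ or $f_{1,n}$. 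The $2mn$ single-commutator terms give $\frac{2mn}{N}(e^{4\|V\|t}-1)$ and the up-to-$N$ double-commutator terms give $\frac{mn}{N}(e^{4\|V\|t}-1)^2$; these sum to the stated $\frac{mn}{N}(e^{8\|V\|t}-1)$. No iteration is needed.

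Your Gronwall approach has the following problems. First, in $G(t)=\langle X(t)Z(t)\rangle-\langle X(t)\rangle\langle Z(t)\rangle$, the operator $X(t)=e^{iH_Nt}Xe^{-iH_Nt}$ is the \emph{full} Heisenberg evolution, not a twisted one. So $\dot X(t)=i[H_N,X(t)]$ contains the unbounded commutator of $X$ with the kinetic energy, which is not removed by the twist on $B$. In the proof of Theorem \ref{thm:comm} this problem does not arise because the quantity differentiated there is $[A,Z(t)]$ with $A$ \emph{fixed}, so only $\dot Z$, which carries the bounded generator $V_J(t)$, appears; the covariance has no analogue of a fixed factor, and you must cope with $\dot X$ one way or another. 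Second, even if you also twist $X$ (using a Hamiltonian decoupling $I$ from everything else), the forcing terms are covariances of $[V(x_\ell-x_k),X]$ -- supported on $I\cup\{k\}$ -- with $Y$. For $k\notin I\cup J$, these do \emph{not} vanish at $t>0$, nor do they reduce to the same $G$-type quantity: they are covariances of an $(m+1)$-particle observable with an $n$-particle observable, and the iteration $g_{m,n}\to g_{m+1,n}\to\cdots$ does not close. The analogous recursion in Theorem \ref{thm:comm} closes because the forcing there reduces to $f_{m,2}$ (the interaction $V$ is a 2-particle operator while $A$ is fixed), a feature that is lost when both arguments of the covariance move. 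Third, the step where you propose to bound these forcing terms ``via Theorem \ref{thm:comm} applied in its two-particle incarnation'' is a category error: Theorem \ref{thm:comm} bounds \emph{commutator norms} $\|[A,B(t)]\|$, whereas what you need to bound is a \emph{covariance}, and the two are not comparable. For these reasons I do not see how the proposal can be completed without, in effect, rediscovering the paper's telescoping identity; I would encourage you to look for an algebraic decomposition of the covariance rather than a differential inequality for it.
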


\begin{proof}
Because of the permutation symmetry we have
\begin{multline}
\langle \psi_{N,t} , A^{[i_1, \dots , i_m]} B^{[j_1, \dots , j_n]} \psi_{N,t} \rangle - \langle \psi_{N,t} , A^{[i_1, \dots ,i_m]} \psi_{N,t} \rangle \langle \psi_{N,t}, B^{[j_1, \dots ,j_n]} \psi_{N,t} \rangle \\
= \langle \psi_{N,t} , A^{[1, \dots , m]} B^{[m+1, \dots , m+n]} \psi_{N,t} \rangle - \langle \psi_{N,t} , A^{[1, \dots ,m]} \psi_{N,t} \rangle \langle \psi_{N,t}, B^{[m+1, \dots ,m+n]} \psi_{N,t} \rangle \, .\end{multline}
We observe that
\begin{equation}
\begin{split}
\langle \psi_{N,t} , &A^{[1,\dots,m]} B^{[m+1, \dots,m+n]} \psi_{N,t} \rangle \\ =  \; &\langle \psi_{N} , e^{i H_N t} A^{[1,\dots,m]} e^{-iH_N t} e^{iH_N t} B^{[m+1, \dots,m+n]} e^{-iH_N t} \psi_N \rangle \\ = \; &\langle \psi_{N} , e^{i H_N t} A^{[1,\dots,m]} e^{-iH_N t} |\ph \rangle \langle \ph|^{\otimes N} e^{iH_N t} B^{[m+1, \dots, m+n]} e^{-iH_N t} \psi_N \rangle \\ &+ \sum_{j=1}^N \langle \psi_{N} , e^{i H_N t} A^{[1,\dots,m]} e^{-iH_N t} \, \left[ |\ph \rangle \langle \ph|^{\otimes (j-1)} \otimes \left( 1-|\ph \rangle \langle \ph| \right) \otimes 1^{\otimes (N-j)} \right] \\ & \hspace{5cm} \times  e^{iH_N t} B^{[m+1, \dots, m + n]} e^{-iH_N t} \psi_N \rangle\,.
\end{split}
\end{equation}
Since $\psi_N = \ph^{\otimes N}$, we get
\begin{equation}
\begin{split}
\langle \psi_{N,t} , &A^{[1,\dots,m]} B^{[m+1, \dots, m+n]} \psi_{N,t} \rangle \\= & \; \langle \psi_{N,t} , A^{[1,\dots,m]} \psi_{N,t} \rangle \langle \psi_{N,t},  B^{[m+1, \dots, m+n]} \psi_{N,t} \rangle \\ &+ \sum_{j=1}^m \left\langle \psi_{N} , e^{i H_N t} A^{[1, \dots ,m]} e^{-iH_N t} \, \left( |\ph \rangle \langle \ph|^{\otimes (j-1)} \otimes 1^{\otimes (N-j+1)}\right)  \right. \\ & \left. \hspace{4cm} \times  \left[ \left( 1-|\ph \rangle \langle \ph| \right)^{[j]}, e^{iH_N t} B^{[m+1, \dots, m+n]} e^{-iH_N t} \right] \psi_N \right\rangle \\ &+ \sum_{j=m+1}^{m+n} \left\langle \psi_{N} , \left[ e^{i H_N t} A^{[1,\dots,m]} e^{-iH_N t} , (1-|\ph\rangle \langle \ph|)^{[j]} \right] \, \left(\ |\ph \rangle \langle \ph|^{\otimes (j-1)} \otimes 1^{\otimes (N-j+1)} \right)  \right. \\ &\left. \hspace{4cm} \times  \; e^{iH_N t} B^{[m+1, \dots ,m+n]} e^{-iH_N t} \psi_N \right\rangle
\\ &+ \sum_{j=m+n+1}^N \left\langle \psi_{N} , \left[ e^{i H_N t} A^{[1,\dots ,m]} e^{-iH_N t}, \left( 1-|\ph \rangle \langle \ph| \right)^{[j]} \right] \, \left( |\ph \rangle \langle \ph|^{\otimes (j-1)} \otimes 1^{\otimes (N-j+1)} \right)\, \right. \\ & \left.\hspace{4cm} \times \left[ \left( 1-|\ph \rangle \langle \ph| \right)^{[j]} , \, e^{iH_N t} B^{[m+1, \dots, m+n]} e^{-iH_N t} \right] \psi_N \right\rangle \,,
\end{split}
\end{equation}
where we used the notation $(1-|\ph \rangle \langle \ph|)^{[j]} = 1^{(j-1)} \otimes (1-|\ph \rangle \langle \ph|) \otimes 1^{(N-j-1)}$ for the operator acting as the projection $(1-|\ph \rangle \langle \ph|)$ over the $j$-th particle, and as the identity over the other $(N-1)$ particles. Note that, by definition $((1-|\ph\rangle \langle \ph|)^{[j]} )^2 = 1-|\ph \rangle \langle \ph|$. {F}rom Theorem \ref{thm:comm}, we obtain
\begin{equation*}
\begin{split}
\Big| \langle \psi_{N,t} , A^{[1, \dots,m]} B^{[m+1, \dots ,m+n]} \psi_{N,t} \rangle - &\langle \psi_{N,t} , A^{[1,\dots,m]} \psi_{N,t} \rangle \langle \psi_{N,t},  B^{[m+1, \dots , m+n]} \psi_{N,t} \rangle \Big| \\ \leq \; & \frac{2\, mn \, \| A \| \| B \|}{N} \, (e^{4\| V \| t} -1) + \frac{mn \| A \| \| B \|}{N} (e^{4\| V \| t} -1 )^2 \\ = \; &\frac{mn \| A \| \|B \|}{N} \, (e^{8\| V \| t}  -1)
\end{split}
\end{equation*}
which completes the proof of the proposition.
\end{proof}

\section{Derivation of the Hartree equation}\label{sec:proof}

Using the bounds on the correlations obtained in the previous section, we can now state and prove our main result.

\begin{proof}[Proof of Theorem \ref{thm:main}]
{F}rom the BBGKY hierarchy, we obtain the integral equation
\begin{equation}\begin{split} \gamma_{N,t}^{(k)} = \; &\cU^{(k)} (t) |\ph \rangle \langle \ph|^{\otimes k} - \frac{i}{N} \sum_{i<j}^k \int_0^t \rd s \, \cU^{(k)}(t-s) \, \left[ V(x_i - x_j) , \gamma_{N,s}^{(k)} \right] \\ &-i \, \frac{(N-k)}{N} \sum_{j=1}^k \int_0^t \rd s \, \cU^{(k)} (t-s)\, \tr_{k+1} \; \left[ V (x_j -x_{k+1}) , \gamma_{N,s}^{(k+1)} \right] \, , \end{split}\end{equation}
where we defined the free evolution
\[ \cU^{(k)} (t) \gamma^{(k)} = e^{-i\sum_{j=1}^k (-\Delta_j + U (x_j)) t} \;  \gamma^{(k)} \; e^{i \sum_{j=1}^k (-\Delta_j + U (x_j)) t} \, , \] and where $\tr_{k+1}$ denotes the partial trace over the $(k+1)$-th particle. On the other hand, it is simple to check that, if $\ph_t$ is a solution to the nonlinear Hartree equation \[ i \partial_t \ph_t = - \Delta \ph_t + \left( V * |\ph_t|^2 \right) \ph_t \] then the orthogonal projection $|\ph_t \rangle \langle \ph_t|^{\otimes k}$ satisfies the equation
\[ |\ph_t \rangle \langle \ph_t |^{\otimes k} = \cU^{(k)} (t) |\ph \rangle \langle \ph|^{\otimes k} -i \sum_{j=1}^k \int_0^t \rd s \, \cU^{(k)} (t-s) \tr_{k+1} \; \left[ V (x_j -x_{k+1}) , |\ph_t \rangle \langle \ph_t|^{\otimes (k+1)} \right] \, . \]
Therefore, for an arbitrary $k$-particle observable $J^{(k)}$ over $L^2 (\bR^{k \nu})$, we obtain
\begin{equation}\label{eq:bd1}
\begin{split}
\tr J^{(k)} \Big( \gamma_{N,t}^{(k)} - &|\ph_t \rangle \langle \ph_t|^{\otimes k} \Big) \\ = &
- \frac{i}{N} \sum_{i <j}^k \int_0^t \rd s \; \tr \, \left(\cU^{(k)} (s-t) J^{(k)} \right) \; \left[ V (x_i -x_j) , \gamma^{(k)}_{N,s} \right] \\ &+ i \frac{k}{N} \sum_{j=1}^k \int_0^t \rd s \;  \tr \, \left(\cU^{(k)} (s-t) J^{(k)} \right) \; \left[ V (x_j -x_{k+1}) , \gamma^{(k+1)}_{N,s} \right] \\
&-i \sum_{j=1}^k \int_0^t \rd s \; \tr \,
\left(\cU^{(k)} (s-t) J^{(k)} \right) \; \left[ V (x_j -x_{k+1}) , \left( \gamma^{(k+1)}_{N,s} - \gamma^{(k)}_{N,s} \otimes \gamma^{(1)}_{N,s} \right) \right] \\
&-i\sum_{j=1}^k \int_0^t \rd s \; \tr \,
\left(\cU^{(k)} (s-t) J^{(k)} \right) \; \left[ V (x_j -x_{k+1}) , \left( \gamma^{(k)}_{N,s} - |\ph_s \rangle \langle \ph_s|^{\otimes k} \right) \otimes |\ph_s \rangle \langle \ph_s| \right] \\
&-i\sum_{j=1}^k \int_0^t \rd s \; \tr \,
\left(\cU^{(k)} (s-t) J^{(k)} \right) \; \left[ V (x_j -x_{k+1}) , \gamma_{N,s}^{(k)} \otimes \left( \gamma^{(1)}_{N,s} - |\ph_s \rangle \langle \ph_s | \right) \right] \, .
\end{split}
\end{equation}
We bound the fourth and fifth term using that
\begin{equation}
\begin{split}\Big| \tr \; \left( \cU^{(k)} (s-t) J^{(k)} \right) \Big[ V(x_j - x_{k+1}), \Big( &\gamma^{(k)}_{N,s} - |\ph_s \rangle \langle \ph_s|^{\otimes k} \Big) \otimes |\ph_s \rangle \langle \ph_s| \Big] \Big| \\ &\leq 2 \| V \| \, \| J^{(k)} \| \, \tr \, \left| \gamma^{(k)}_{N,s} - |\ph_s \rangle \langle  \ph_s|^{\otimes k} \right|
\end{split}
\end{equation}
and that, analogously,
\begin{equation}
\begin{split}
\Big| \tr \; \left( \cU^{(k)} (s-t) J^{(k)} \right) \Big[ V(x_j - x_{k+1}), &\gamma_{N,s}^{(k)} \otimes \left( \gamma^{(1)}_{N,s} - |\ph_s \rangle \langle \ph_s| \right) \Big] \Big| \\ &\leq 2 \| V \| \, \| J^{(k)} \| \, \tr \, \left| \gamma^{(1)}_{N,s} - |\ph_s \rangle \langle  \ph_s| \right|\,.
\end{split}
\end{equation}
As for the third term on the r.h.s. of (\ref{eq:bd1}), we expand the potential in a Fourier integral. We obtain
\begin{equation}
\begin{split}
\Big| \tr \; &\left(\cU^{(k)} (s-t) J^{(k)} \right) \, \left[ V (x_j -x_{k+1}), \left(\gamma_{N,s}^{(k+1)} - \gamma_{N,s}^{(k)} \otimes \gamma_{N,s}^{(1)} \right) \right] \Big| \\  & \leq \, \int \rd q \, |\widehat{V} (q)| \; \Big| \tr \; \left(\cU^{(k)} (s-t) J^{(k)} \right) \, \left[ e^{i q \cdot x_j} \otimes e^{-i q \cdot x_{k+1}}, \left(\gamma_{N,s}^{(k+1)} - \gamma_{N,s}^{(k)} \otimes \gamma_{N,s}^{(1)} \right) \right] \Big| \\
&\leq \, \int \rd q \, |\widehat{V} (q)| \; \Big| \tr \; \left(\cU^{(k)} (s-t) J^{(k)} \right) e^{i q \cdot x_j} \otimes e^{-i q \cdot x_{k+1}} \left(\gamma_{N,s}^{(k+1)} -\gamma_{N,s}^{(k)} \otimes \gamma_{N,s}^{(1)} \right) \Big| \\ &\hspace{.5cm} + \int \rd q \, |\widehat{V} (q)| \; \Big| \tr \;  \, e^{i q \cdot x_j}\left(\cU^{(k)} (s-t) J^{(k)} \right) \otimes e^{-i q \cdot x_{k+1}} \left(\gamma_{N,s}^{(k+1)} - \gamma_{N,s}^{(k)} \otimes \gamma_{N,s}^{(1)} \right)  \Big| \\
&\leq \,\frac{2 k\, \| \widehat{V} \|_1 \, \| J^{(k)} \|}{N} \; \left( e^{8 \| V \| s} -1\right)
\end{split}
\end{equation}
where we applied Proposition \ref{prop2}. Thus, from (\ref{eq:bd1}), we find that
\begin{equation}
\begin{split}
\Big| \tr \; J^{(k)} &\left( \gamma_{N,t}^{(k)} - |\ph_t \rangle \langle \ph_t|^{\otimes k} \right) \Big| \\ \leq \; &2k \, \| V \| \| J^{(k)} \| \int_0^t \rd s \, \tr \, \left|  \gamma_{N,s}^{(k)} - |\ph_s \rangle \langle \ph_s|^{\otimes k} \right| + 2k \, \| V \| \| J^{(k)} \| \int_0^t \rd s \, \tr\, \left|  \gamma_{N,s}^{(1)} - |\ph_s \rangle \langle \ph_s| \right|\\
&+ \frac{k^2 \, \| \widehat{V} \|_1 \,  \| J^{(k)} \|}{4 \| V \| N} \; \left(e^{8 \| V \| t} - 1 \right) + \frac{3 t k^2 \| V \| \| J^{(k)} \|}{N} \, .
\end{split}
\end{equation}
 For a trace class operator $A$ over $L^2 (\bR^{\nu k})$, let $\| A \|_1 = \tr \, |A|$ be the trace norm of $A$. If we denote by $\cK_k$ the algebra of compact operators on $L^2 (\bR^{\nu k})$, we have the variational characterization of the trace norm
 \[ \| A \|_1 = \tr \; \left|  A \right| = \sup_{J \in \cK_k \, , \, \| J \| = 1} \left| \tr \; J \, A \right| \, .\] Therefore we find
\begin{equation}\label{eq:bd2}
\begin{split}
\left\| \gamma_{N,t}^{(k)} - |\ph_t \rangle \langle \ph_t|^{\otimes k} \right\|_1 \leq \; &2k \| V \| \int_0^t \rd s \, \left( \left\| \gamma_{N,s}^{(k)} - |\ph_s \rangle \langle \ph_s|^{\otimes k} \right\|_1 + \left\| \gamma_{N,s}^{(1)} - |\ph_s \rangle \langle \ph_s| \right\|_1 \right) \\ &+ \frac{k^2 \, \lambda_V}{2N} \, \left( e^{8\| V \| t} -1 \right) .
\end{split}
\end{equation}
with $\lambda_V = 1 + (\| \widehat{V} \|_1 / \| V \|)$. For $k=1$, we obtain
\[ \left\| \gamma_{N,t}^{(1)} - |\ph_t \rangle \langle \ph_t| \right\|_1 \leq \; 4 \| V \| \int_0^t \rd s \,  \left\| \gamma_{N,s}^{(1)} - |\ph_s \rangle \langle \ph_s| \right\|_1 + \frac{\lambda_V}{2N} \, \left( e^{8\| V \| t} -1 \right)\,. \]
Iterating this inequality $n$ times, it follows that
\begin{equation}
\begin{split}
\left\| \gamma_{N,t}^{(1)} - |\ph_t \rangle \langle \ph_t| \right\|_1 \leq \; & \frac{\lambda_V}{2N} \, \left( e^{8 \| V \| t} -1 \right)  \\ & + \frac{\lambda_V}{2N} \sum_{m=1}^{n-1} (4 \| V \| )^{m}  \int_0^t \rd s_1 \int_0^{s_1} \rd s_2 \dots \int_0^{s_{m-1}} \rd s_m \,  \, \left( e^{8\| V \| s_m} -1 \right)  \\ &+ (4 \| V \| )^{n} \int_0^t \rd s_1 \dots \int_0^{s_{n-1}} \rd s_n \; \left\| \gamma_{N,s_n}^{(1)} - |\ph_{s_n} \rangle \langle \ph_{s_n} | \right\|_1 \, .
\end{split}
\end{equation}
Since \[ \int_0^t \rd s_1 \dots \int_0^{s_{m-1}} \rd s_m \; e^{8\| V \| s_m} \leq \frac{\left(e^{8 \| V \| t} -1 \right)}{(8 \| V \|)^m} \] and $\| \gamma_{N,s_n}^{(1)} - |\ph_{s_n} \rangle \langle \ph_{s_n} | \|_1 \leq 2$, we find
\begin{equation}
\left\| \gamma_{N,t}^{(1)} - |\ph_t \rangle \langle \ph_t| \right\|_1 \leq \frac{\lambda_V}{N} \, \left( e^{8 \| V \| t} -1 \right) \, .
\end{equation}
Inserting this bound into the r.h.s. of (\ref{eq:bd2}), we conclude that
\[  \left\| \gamma_{N,t}^{(k)} - |\ph_t \rangle \langle \ph_t|^{\otimes k} \right\|_1 \leq 2k \| V \| \int_0^t \rd s \, \left\| \gamma_{N,s}^{(k)} - |\ph_s \rangle \langle \ph_s|^{\otimes k} \right\|_1 + \frac{k^2 \, \lambda_V}{N} \,
\left( e^{8\| V \| t} -1 \right) \, .\]
Iteration leads to
\begin{equation}\begin{split}
\left\| \gamma_{N,t}^{(k)} - |\ph_t \rangle \langle \ph_t|^{\otimes k} \right\|_1 \leq \; & \frac{k^2 \lambda_V}{N} \, \left( e^{8 \| V \| t} -1 \right)  \\ &+ \frac{k^2 \lambda_V}{N} \sum_{m=1}^{n-1} (2k \| V \| )^{m}  \int_0^t \rd s_1 \int_0^{s_1} \rd s_2 \dots \int_0^{s_{m-1}} \rd s_m \,  \, \left(e^{8\| V \| s_m}-1\right)  \\ &+ (2k \| V \| )^{n} \int_0^t \rd s_1 \dots \int_0^{s_{n-1}} \rd s_n \; \left\| \gamma_{N,s_n}^{(k)} - |\ph_{s_n} \rangle \langle \ph_{s_n} |^{\otimes k} \right\|_1
\end{split}
\end{equation}
which implies that
\[ \left\| \gamma_{N,t}^{(k)} - |\ph_t \rangle \langle \ph_t|^{\otimes k} \right\|_1 \leq \frac{k^2 \lambda_V}{N} \; e^{2k \| V \| } \, \left( e^{8 \| V\| t}-1 \right) \, .\]
\end{proof}

\thebibliography{hh}

\bibitem{ABGT} Adami, R.; Bardos, C.; Golse, F.; Teta, A.:
Towards a rigorous derivation of the cubic nonlinear Schr\"odinger
equation in dimension one. \textit{Asymptot. Anal.} \textbf{40}
(2004), no. 2, 93--108.

\bibitem{AGT} Adami, R.; Golse, F.; Teta, A.:
Rigorous derivation of the cubic NLS in dimension one. {\it J. Stat. Phys.} {\bf 127} (2007), no. 6, 1193--1220.

\bibitem{BGM}
Bardos, C.; Golse, F.; Mauser, N.: Weak coupling limit of the
$N$-particle Schr\"odinger equation.
\textit{Methods Appl. Anal.} \textbf{7} (2000), 275--293.

\bibitem{EESY} Elgart, A.; Erd{\H{o}}s, L.; Schlein, B.; Yau, H.-T.
 {G}ross--{P}itaevskii equation as the mean filed limit of weakly
coupled bosons. \textit{Arch. Rat. Mech. Anal.} \textbf{179} (2006),
no. 2, 265--283.

\bibitem{ES} Elgart, A.; Schlein, B.: Mean Field Dynamics of Boson Stars.
\textit{Commun. Pure Appl. Math.} {\bf 60} (2007), no. 4, 500--545.

\bibitem{ESY0}
Erd{\H{o}}s L.; Schlein, B.; Yau, H.-T.:  Derivation of the
{G}ross-{P}itaevskii Hierarchy for the Dynamics of {B}ose-{E}instein
Condensate. \textit{Commun. Pure Appl. Math.} {\bf 59} (2006), no. 12, 1659--1741.

\bibitem{ESY1} Erd{\H{o}}s, L.; Schlein, B.; Yau, H.-T.:
Derivation of the cubic non-linear Schr\"odinger equation from
quantum dynamics of many-body systems. {\it Invent. Math.} {\bf 167} (2007), 515--614.

\bibitem{ESY2} Erd{\H{o}}s, L.; Schlein, B.; Yau, H.-T.: Derivation of the Gross-Pitaevskii Equation for the Dynamics of Bose-Einstein Condensate. Preprint arXiv:math-ph/0606017. To appear in {\it Ann. Math.}

\bibitem{ESY3} Erd{\H{o}}s, L.; Schlein, B.; Yau, H.-T.: Rigorous derivation of the Gross-Pitaevskii equation. {\it Phys. Rev Lett.} {\bf 98} (2007), no. 4, 040404.

\bibitem{ESY4} Erd{\H{o}}s, L.; Schlein, B.; Yau, H.-T.: Rigorous derivation of the Gross-Pitaevskii equation: the case of a strong potential. Preprint, arXiv: 0802.3877.

\bibitem{EY} Erd{\H{o}}s, L.; Yau, H.-T.: Derivation
of the nonlinear {S}chr\"odinger equation from a many body {C}oulomb
system. \textit{Adv. Theor. Math. Phys.} \textbf{5} (2001), no. 6,
1169--1205.

\bibitem{FGS} Fr\"ohlich, J.; Graffi, S.; Schwarz, S.: 
Mean-field- and classical limit of many  body 
Schr\"odinger dynamics for bosons. {\it Comm. Math. Phys.} 
{\bf 271} (2007), no. 3, 681-697.

\bibitem{FKP} Fr\"ohlich, J.; Knowles, A.; Pizzo, A.: Atomism and quantization. {\it J. Phys. A} {\bf 40} (2007), no. 12, 3033-3045. 

\bibitem{GV} Ginibre, J.; Velo, G.: The classical
field limit of scattering theory for non-relativistic many-boson
systems. I and II. \textit{Comm. Math. Phys.} \textbf{66} (1979),
37--76, and \textbf{68} (1979), 45--68.

\bibitem{Hepp} Hepp, K.: The classical limit for quantum mechanical
correlation functions. \textit{Comm. Math. Phys.} \textbf{35}
(1974), 265--277.

\bibitem{LR} Lieb, E.H.; Robinson, D.W.: The finite group velocity of quantum spin systems. {\it  Comm. Math. Phys.} {\bf 28} (1972), 251–257.

\bibitem{NS} Nachtergaele, B.; Sims, R.:
Lieb-Robinson bounds and the exponential clustering theorem,
{\it Comm. Math. Phys.} {\bf 265}
(2006), 119--130.

\bibitem{NOS} Nachtergaele, B.; Ogata, Y,; Sims, R.:
Propagation of Correlations in Quantum Lattice Systems, {\it
J. Stat. Phys.} {\bf 124} (2006), no. 1, 1--13.

\bibitem{NRSS} Nachtergaele, B.; Raz, H.; Schlein, B.; Sims, R.:
Lieb-Robinson Bounds for Harmonic and Anharmonic Lattice Systems. Preprint, arXiv: 0712.3820.

\bibitem{RS} Rodnianski, I.; Schlein, B.: Quantum fluctuations and rate of convergence towards
mean field dynamics. Preprint arXiv: 0711.3087.

\bibitem{Spohn} Spohn, H.: Kinetic Equations from Hamiltonian Dynamics.
   \textit{Rev. Mod. Phys.} \textbf{52} (1980), no. 3, 569--615.

\end{document}